\newcommand{\boldzero}{\ensuremath{\boldsymbol{0}}}
\newcommand{\bfc}{\ensuremath{\mathbf{c}}}
\newcommand{\bfg}{\ensuremath{\mathbf{g}}}
\newcommand{\calD}{\ensuremath{\mathcal{D}}}
\newcommand{\calM}{\ensuremath{\mathcal{M}}}
\newtheorem{lem}{Lemma}[section]
\newtheorem{thm}[lem]{Theorem}
\newtheorem{defn}[lem]{Definition}
\newtheorem{obs}[lem]{Observation}
\newcommand{\vast}{\bBigg@{4}}
\newcommand{\Vast}{\bBigg@{5}}
\newcommand{\ex}[2]{{\ifx&#1& \mathbb{E} \else
\underset{#1}{\mathbb{E}} \fi \left[#2\right]}}
\newcommand{\pr}[2]{{\ifx&#1& \mathbb{P} \else
\underset{#1}{\mathbb{P}} \fi \left[#2\right]}}
\newcommand{\ltwo}[1]{\left\|#1\right\|_2}
\DeclarePairedDelimiterX{\infdivx}[2]{(}{)}{%
  #1\;\delimsize\|\;#2%
}
\renewcommand{\epsilon}{\varepsilon}
\newcommand{\clip}[2]{{\sf clip}\left(#1,#2\right)}
\setlist{nolistsep}
\setlist[itemize]{noitemsep, topsep=0pt}
\setlist{nolistsep}
\setlist[itemize]{noitemsep, topsep=0pt}
\begin{document}

\title{Tight Group-Level DP Guarantees for DP-SGD with Sampling via Mixture of Gaussians Mechanisms}
\author{Arun Ganesh\thanks{Google Research, \texttt{arunganesh@google.com}}}
\maketitle

\begin{abstract}
We give a procedure for computing group-level $(\epsilon, \delta)$-DP guarantees for DP-SGD, when using Poisson sampling or fixed batch size sampling. Up to discretization errors in the implementation, the DP guarantees computed by this procedure are tight (assuming we release every intermediate iterate).
\end{abstract}

\section{Introduction}

We consider DP-SGD with sampling: We have a dataset $D$ and a loss function $\ell$, and start with an initial model $\theta_0$. In round $i$, we sample a subset of $D$, $D_i$, according to some distribution (in this text, we will consider Poisson sampling and fixed batch size sampling). Let $\clip{\bfg}{L} := \frac{\bfg}{\max\{1, \ltwo{\bfg}/L\}}$. For a given noise multiplier $\sigma$, clip norm $L$, and (wlog, constant) step size $\eta$, DP-SGD is given by $T$ iterations of the update

\begin{equation}\label{eq:dp-sgd}
\theta_i = \theta_{i-1} - \eta \left(\sum_{d \in D_i} \clip{\nabla \ell(\theta_{i-1}, d)}{L} + \xi_i\right), \xi_i \stackrel{i.i.d.}{\sim} N(\boldzero, \sigma^2 L^2 \cdot \mathbb{I}).
\end{equation}

We are interested in proving group-level $(\epsilon, \delta)$-DP guarantees for DP-SGD with sampling, i.e., DP guarantees when we define two datasets $D$ and $D'$ to be adjacent if $D$ is equal to $D'$ plus up to $k$ added examples (or vice-versa). To the best 
of our knowledge, the best way to do this prior to this note was to prove an example-level DP guarantee and then use the following lemma of \cite{vadhan2017complexity}:

\begin{lem}[Lemma 2.2 in \cite{vadhan2017complexity}]\label{lem:vadhan}
If a mechanism satisfies $(\epsilon, \delta)$-DP with respect to examples, it satisfies $(k \epsilon, k e^{k \epsilon} \delta)$-DP with respect to groups of size $k$.
\end{lem}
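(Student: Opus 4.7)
The plan is to prove \Cref{lem:vadhan} by the standard hybrid-dataset argument: interpolate between $D$ and $D'$ using a chain of example-adjacent datasets, then iteratively apply the example-level $(\epsilon,\delta)$-DP guarantee $k$ times, and finally bound the resulting additive error by a geometric sum.

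Concretely, suppose $D$ and $D'$ differ by $k$ added examples (the case of $k$ removed, or a mix of added and removed, is symmetric and handled analogously). Write $D' = D \cup \{x_1, \ldots, x_k\}$ and set $D_0 = D$, $D_i = D \cup \{x_1, \ldots, x_i\}$, so that $D_k = D'$ and each consecutive pair $(D_{i-1}, D_i)$ is example-adjacent. For any measurable output set $S$, the example-level DP guarantee applied to $(D_{i-1}, D_i)$ gives
\[
\Pr[M(D_{i-1}) \in S] \le e^{\epsilon}\Pr[M(D_i) \in S] + \delta.
\]
I would then induct on $i$: assuming $\Pr[M(D_0) \in S] \le e^{i\epsilon}\Pr[M(D_i) \in S] + \delta\sum_{j=0}^{i-1} e^{j\epsilon}$, substitute the single-step bound for $D_i, D_{i+1}$ to obtain
\[
\Pr[M(D_0) \in S] \le e^{(i+1)\epsilon}\Pr[M(D_{i+1}) \in S] + \delta\sum_{j=0}^{i} e^{j\epsilon}.
\]
Taking $i = k$ yields $\Pr[M(D) \in S] \le e^{k\epsilon}\Pr[M(D') \in S] + \delta \cdot \frac{e^{k\epsilon}-1}{e^\epsilon - 1}$.

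The final step is to bound the geometric sum: since each of the $k$ terms $1, e^\epsilon, \ldots, e^{(k-1)\epsilon}$ is at most $e^{k\epsilon}$, we have $\frac{e^{k\epsilon}-1}{e^\epsilon - 1} = \sum_{j=0}^{k-1} e^{j\epsilon} \le k e^{k\epsilon}$, which gives exactly the claimed $(k\epsilon, k e^{k\epsilon}\delta)$-DP guarantee (after also noting the symmetric lower bound by swapping the roles of $D$ and $D'$). There is no real obstacle here; the only mildly delicate point is handling mixed add/remove adjacency, which is resolved by building the hybrid chain in two phases (first removing the $k_-$ examples in $D \setminus D'$ one at a time, then adding the $k_+$ examples in $D' \setminus D$ one at a time, with $k_- + k_+ \le k$), so that the same telescoping bound goes through with $k$ steps overall.
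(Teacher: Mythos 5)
Your proof is correct: the hybrid-chain argument with the telescoping geometric sum $\sum_{j=0}^{k-1} e^{j\epsilon} \le k e^{k\epsilon}$ is exactly the standard proof of this group-privacy lemma, and it matches the argument in the cited source (the paper itself imports Lemma 2.2 of \cite{vadhan2017complexity} without reproving it). The only remark worth making is that under this paper's add-or-remove-up-to-$k$ adjacency the two-phase mixed add/remove chain you describe is not even needed, since each adjacent pair consists of pure additions or pure removals.
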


While we can compute tight example-level DP guarantees using privacy loss distribution accounting \cite{koskela21tight, gopi21numerical, doroshenko2022connect}, using \cref{lem:vadhan} to turn these into group-level DP guarantees has two weaknesses. First, while \cref{lem:vadhan} is powerful in that it applies to any $(\epsilon, \delta)$-DP mechanism in a black-box manner, since it does not use the full $(\epsilon, \delta)$ tradeoff curve (i.e., it is agnostic to details of the mechanism) it is unlikely to be tight for any specific mechanism such as DP-SGD. Second, in order to prove a mechanism satisfies $(\epsilon, \delta)$-DP with respect to groups of size $k$, we need to show it satisfies $(\epsilon', \delta')$-DP with respect to one example, where $\delta'$ can potentially be much smaller than $\delta$ for large $k$ and $\epsilon'$. This is problematic in practice, as for typical choices of $\delta$, $\delta'$ might be small enough to cause numerical stability issues. Indeed, at the time of writing TensorFlow Privacy's \texttt{compute\_dp\_sgd\_privacy\_lib}\footnote{\url{https://github.com/tensorflow/privacy/blob/master/tensorflow_privacy/privacy/analysis/compute_dp_sgd_privacy_lib.py}} will sometimes return $\epsilon = \infty$ when computing group-level privacy guarantees because of this numerical stability issue. For example, for 2000 rounds of DP-SGD, with Poisson sampling with probability 1/100, $\sigma = 1$, $\delta = 10^{-6}$, \texttt{compute\_dp\_sgd\_privacy\_lib} reports $\epsilon = \infty$ with respect to groups of $k \geq 9$ (see \cref{sec:empirical}).

In this note, we give an alternate analysis using a tool called Mixture of Gaussians (MoG) Mechanisms developed in \cite{choquettechoo2023privacy} that addresses both these issues: The group-level DP guarantees it computes are tight, and it avoids the need to compute $\epsilon$ for small values of $\delta$ and the associated numerical stability errors.

This note is structured as follows: In \cref{sec:pld} we give a quick summary of privacy loss distribution accounting, including discussing accounting for mixture of Gaussians mechanisms. In \cref{sec:poisson} we state our approach for computing tight group-level DP guarantees in the cases of Poisson and fixed batch size sampling, including a brief argument that both approaches are tight in general. In \cref{sec:empirical}, we give an empirical comparison between our approach and the approach using \cref{lem:vadhan}. 
\section{Privacy Loss Distributions}\label{sec:pld}

Here we give a brief overview of key definitions for privacy loss distribution (PLD) accounting, and then at the end describe the overall high-level strategy of PLD accounting that we will employ.

\subsection{Definitions and Lemmas}

We formalize the notion of adjacent datasets as follows:

\begin{defn}
Given a data domain $\calD$, an \textbf{adjacency} $A$ is a set of ordered pairs of databases satisfying some relation, i.e. is a subset of $\calD^* \times \calD^*$.
\end{defn}

Approximate DP is closely tied to the $\alpha$-hockey-stick divergence:
\begin{defn}
For $\alpha \geq 0$, the \textbf{$\alpha$-hockey-stick divergence} between $P$ and $Q$ is \[H_\alpha(P, Q) := \int \max\left\{P(x) - \alpha \cdot Q(x), 0\right\} dx = \mathbb{E}_{x \sim P}\left[ \max\left\{1 - \alpha \cdot \frac{Q(x)}{P(x)}, 0\right\}\right].\]
\end{defn}

An $(\epsilon, \delta)$-DP guarantee under adjacency $A$ is equivalent to the statement:

\[\forall (D, D') \in A: H_{e^\epsilon}(\calM(D), \calM(D')) \leq \delta.\]

\begin{defn}
The \textbf{privacy loss distribution (PLD)} of $\calM(D)$ and $\calM(D')$ is the distribution of $\ln \left(\frac{\calM(D)(x)}{\calM(D')(x)}\right)$ where $x$ has distribution $\calM(D)$.
\end{defn}

From the definition of $\alpha$-hockey-stick divergence, we see that knowing the PLD of $\calM(D)$ and $\calM(D')$ suffices to compute the $\alpha$-hockey-stick divergence between $\calM(D)$ and $\calM(D')$ for all $\alpha \geq 0$. In particular, if $L$ is the privacy loss random variable of $\calM(D)$ and $\calM(D')$, then:

\begin{equation}\label{eq:pld-to-dp}
    H_\alpha(\calM(D), \calM(D')) = \mathbb{E}_{L}\left[ \max\left\{1 - \alpha \cdot e^{-L}, 0\right\}\right].
\end{equation}

Dominating pairs can be used to characterize (an upper bound on) the worst-case DP guarantees of a mechanism over all adjacent pairs of databases:

\begin{defn}[Definition 7 of \cite{zhucharacteristic2022}]
We say $P, Q$ \textbf{dominates} $P', Q'$ if for all $\alpha \geq 0$:

\[H_\alpha(P', Q') \leq H_\alpha(P, Q).\]

$P, Q$ are a \textbf{dominating pair for $\calM$ under adjacency $A$} if for all $(D, D') \in A$, $P, Q$ dominates $\calM(D), \calM(D')$.
\end{defn}

In turn, to prove DP guarantees for a mechanism it suffices to compute the PLD for a dominating pair for the mechanism rather than compute the PLD of $\calM(D), \calM(D')$ for all pairs of adjacent databases. We will restrict our attention to the add-up-to-$k$ and remove-up-to-$k$ adjacency, which correspond to adding or removing (at most) $k$ examples from $D$ to get $D'$, i.e. capture the notion of adjacency for group-level privacy of groups of size at most $k$:

\begin{defn}
Given a data domain $\calD$, the \textbf{add-up-to-$k$ adjacency} is $A_{add, k} := \{(D, D') \in \calD^* \times \calD^* : D \subseteq D', |D| + k \geq |D'|\}$. The \textbf{remove-up-to-$k$ adjacency} is $A_{rem, k} := \{(D, D') \in \calD^* \times \calD^*: D \supseteq D', |D| \leq |D'| + k\}$. The \textbf{add-or-remove-up-to-$k$ adjacency} is $A_{k} := A_{add, k} \cup A_{rem, k}$.
\end{defn}

Note that $k = 1$ retrieves a standard notion of example-level DP.
The last statement we need is the fact that dominating pairs can be adaptively composed:

\begin{lem}[Theorem 10 in \cite{zhucharacteristic2022}]\label{lem:adaptive-composition}
Let $\calM_1, \calM_2$ be two mechanisms. Under any adjacency, if $P_1, Q_1$ is a dominating pair for $\calM_1$ and $P_2, Q_2$ is a dominating pair for $\calM_2$, then $P_1 \times P_2, Q_1 \times Q_2$ is a dominating pair for $\calM_1 \times \calM_2$. Furthermore, this holds if $P_2, Q_2$ are fixed whereas $\calM_2$ is chosen adaptively based on the output of $\calM_1$, as long as $P_2, Q_2$ is a dominating pair for all possible choices of $\calM_2$.
\end{lem}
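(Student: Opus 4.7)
The plan is to reduce both parts of the lemma to the data processing inequality for hockey-stick divergences, combined with a post-processing characterization of dominating pairs. I would take as a primitive the fact, essentially Blackwell's comparison theorem applied to the family $\{H_\alpha\}_{\alpha \geq 0}$, that $(P, Q)$ dominates $(P', Q')$ if and only if there exists a Markov kernel $K$ with $K_* P = P'$ and $K_* Q = Q'$; the hockey-stick divergences are rich enough to characterize the Blackwell order because every convex $f$ used in an $f$-divergence can be expressed as a non-negative mixture of the integrands $(x - \alpha)_+$ underlying $H_\alpha$ (plus affine terms that vanish on the $f$-divergence). With this in hand, the data processing inequality for each $H_\alpha$ (which holds because each is an $f$-divergence) will carry the proof.

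For the non-adaptive case, the characterization yields, for any fixed adjacent $(D, D')$, Markov kernels $K_1, K_2$ such that $(K_i)_* P_i = \calM_i(D)$ and $(K_i)_* Q_i = \calM_i(D')$. Applying the product kernel $K_1 \otimes K_2$ to $(P_1 \times P_2, Q_1 \times Q_2)$ produces $(\calM_1(D) \times \calM_2(D), \calM_1(D') \times \calM_2(D'))$, so data processing gives $H_\alpha(\calM_1(D) \times \calM_2(D), \calM_1(D') \times \calM_2(D')) \leq H_\alpha(P_1 \times P_2, Q_1 \times Q_2)$ for every $\alpha \geq 0$, which is exactly the domination claim.

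For the adaptive case, write the composed mechanism as: draw $y_1$ from $\calM_1(\cdot)$, then draw $y_2$ from $\calM_2^{y_1}(\cdot)$, where the superscript denotes the adaptive choice. For every $y_1$, the hypothesis supplies a kernel $K_2^{y_1}$ witnessing the domination of $(\calM_2^{y_1}(D), \calM_2^{y_1}(D'))$ by $(P_2, Q_2)$. Define a joint kernel $K$ on $(x_1, x_2)$ by first setting $y_1 = K_1(x_1)$, then $y_2 = K_2^{y_1}(x_2)$, and outputting $(y_1, y_2)$. Because $P_1 \times P_2$ has independent coordinates, the conditional law of $x_2$ given $x_1$ is still $P_2$, so $K$ sends $P_1 \times P_2$ to the adaptively composed output under $D$ and $Q_1 \times Q_2$ to the analogous output under $D'$; data processing again concludes the argument. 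The main technical obstacle is the post-processing characterization itself, which requires the Blackwell-type structural result; once that is assumed, the remainder is essentially bookkeeping, modulo a measurable selection check that $y_1 \mapsto K_2^{y_1}$ can be chosen measurably, which is standard in the Polish or discretized settings relevant here.
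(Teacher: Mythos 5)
The paper itself does not prove \cref{lem:adaptive-composition}; it imports it as Theorem~10 of \cite{zhucharacteristic2022}, whose proof argues directly with hockey-stick divergences (conditioning on the first coordinate and exploiting convexity and monotonicity of $\alpha \mapsto H_\alpha(P_2,Q_2)$), never constructing any kernels. Your route is genuinely different and, modulo the caveats you already flag, correct: you reduce domination to the existence of a post-processing kernel via Blackwell's comparison theorem for binary experiments, after which both the non-adaptive and adaptive claims are a single application of the data-processing inequality to an explicit kernel on the product space; the adaptive case correctly uses the independence of the coordinates of $P_1 \times P_2$ so that $x_2 \sim P_2$ conditionally on $y_1$. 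Two points deserve care. First, your mixture-of-$(x-\alpha)_+$ remark only proves the easy direction (kernel $\Rightarrow$ dominance in every $H_\alpha$, hence in every $f$-divergence); the direction you actually need --- that dominance in $H_\alpha$ for all $\alpha \geq 0$ (which, via $H_\alpha(P,Q) = 1-\alpha+\alpha H_{1/\alpha}(Q,P)$ for $\alpha<1$, pins down the full tradeoff curve) implies existence of the kernel --- is the nontrivial content of Blackwell's theorem for dichotomies and should be invoked as such. Second, the measurable selection of $y_1 \mapsto K_2^{y_1}$ is a real obstruction in general measurable settings and is exactly what the divergence-based proof sidesteps. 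What your approach buys is conceptual transparency; what it costs is reliance on a converse strictly stronger than what the lemma needs.
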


While the above lemma is stated for pairs of mechanisms, it is straightforward to extend it to a sequence of dominating pairs $(P_1, Q_1), (P_2, Q_2) \ldots, (P_n, Q_n)$ for an adaptively chosen mechanisms $\calM_1, \calM_2, \ldots, \calM_n$ of arbitrary length. Here, the requirement is that $(P_i, Q_i)$ is a dominating pair for all possible choices of $\calM_i$. We can compute the PLD of a composition of mechanisms from the individual mechanisms' PLDs via the following observation:

\begin{obs}
Let $L_1$ be the privacy loss distribution of $P_1, Q_1$ and $L_2$ be the privacy loss distribution of $P_2, Q_2$. Then $L_1 \ast L_2$, where $\ast$ denotes convolution, gives the privacy loss distribution of $P_1 \times P_2, Q_1 \times Q_2$.
\end{obs}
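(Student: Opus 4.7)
The plan is to unpack the definition of the privacy loss distribution for a product pair of measures and then use the fact that the density of a product measure factors. Let $(x_1, x_2) \sim P_1 \times P_2$. By independence, $x_1 \sim P_1$ and $x_2 \sim P_2$ are independent random variables. The Radon-Nikodym derivative of $P_1 \times P_2$ with respect to $Q_1 \times Q_2$ factors as
\[
\frac{(P_1 \times P_2)(x_1, x_2)}{(Q_1 \times Q_2)(x_1, x_2)} = \frac{P_1(x_1)}{Q_1(x_1)} \cdot \frac{P_2(x_2)}{Q_2(x_2)},
\]
so taking logarithms gives
\[
\ln\frac{(P_1 \times P_2)(x_1, x_2)}{(Q_1 \times Q_2)(x_1, x_2)} = \ln\frac{P_1(x_1)}{Q_1(x_1)} + \ln\frac{P_2(x_2)}{Q_2(x_2)}.
\]

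Next, I would note that by the definition of the PLD, the first summand has distribution $L_1$ (since $x_1 \sim P_1$) and the second summand has distribution $L_2$ (since $x_2 \sim P_2$). Since $x_1$ and $x_2$ are independent, the two summands are independent random variables. The distribution of a sum of independent random variables is, by the standard definition, the convolution of their marginal distributions. Therefore the privacy loss random variable for $P_1 \times P_2, Q_1 \times Q_2$ has distribution $L_1 \ast L_2$, which is exactly the claim.

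The argument is essentially a one-line calculation, so there is no real obstacle beyond bookkeeping. The only subtlety I would flag explicitly is handling the case where some $Q_i$ is not absolutely continuous with respect to $P_i$ (or vice-versa), which leads to privacy losses of $\pm \infty$; this is the standard convention that $\ln(p/0) = +\infty$ and the PLD is viewed as a distribution on the extended reals, under which $L_1 \ast L_2$ remains well-defined. Everything else is immediate from the factorization above.
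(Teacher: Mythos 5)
Your proof is correct and is exactly the standard argument the paper implicitly relies on (the paper states this as an observation without proof): the log-likelihood ratio of the product pair factors into a sum of independent privacy loss random variables, whose distribution is the convolution $L_1 \ast L_2$. Your remark about the $\pm\infty$ convention when absolute continuity fails is a reasonable extra care that the paper does not bother to make explicit.
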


\subsection{High-Level Strategy}

Combining the definitions and lemmas given so far, we can arrive at the following high-level strategy, PLD accounting, for computing valid (and often tight) $(\epsilon, \delta)$-DP guarantees for a sequence of adaptively chosen mechanisms $\calM_1, \calM_2, \ldots, \calM_n$ under the add-or-remove-up-to-$k$ adjacency. The validity of this procedure follows from these definitions and lemmas and is covered in detail in the PLD accounting literature (see e.g. \cite{koskela21tight}), so we do not give a formal proof here.

\begin{enumerate}
    \item Choose $(P_1, Q_1), (P_2, Q_2), \ldots, (P_n, Q_n)$ that are dominating pairs for (all possible adaptive choices of) $\calM_1, \calM_2, \ldots, \calM_n$ under $A_{add, k}$.
    \item Compute $L_1, L_2, \ldots, L_n$, the privacy loss distributions of $(P_1, Q_1), (P_2, Q_2), \ldots, (P_n, Q_n)$.
    \item Let $L_{add} = L_1 \ast L_2 \ast \ldots \ast L_n$.
    \item Repeat the previous steps under $A_{rem, k}$ to get $L_{rem}$
    \item If \[\mathbb{E}_{L_{add}}\left[ \max\left\{1 - e^{\epsilon-L_{add}}, 0\right\}\right] \leq \delta, \mathbb{E}_{L_{rem}}\left[ \max\left\{1 - e^{\epsilon-L_{rem}}, 0\right\}\right] \leq \delta,\] then we can report that $\calM$ satisfies $(\epsilon, \delta)$-DP under the add-or-remove-up-to-$k$ adjacency. In particular, for a target $\delta$ we can compute the minimum $\epsilon$ such that these inequalities hold (or vice-versa).
\end{enumerate}
\subsection{Mixture of Gaussians Mechanisms}\label{sec:mog}

Mixture of Gaussians mechanisms are an analytic tool introduced in \cite{choquettechoo2023privacy}, representing a generalization of subsampled Gaussian mechanisms where the sensitivity is a random variable.

\begin{defn}
A \textbf{mixture of Gaussians (MoG) mechanism}, $\calM_{MoG}(\{\bfc_i\}, \{p_i\})$, is defined by a sensitivity vector random variable, i.e. a list of sensitivities $\{\bfc_1, \bfc_2, \ldots, \bfc_n\}$ and a list of probabilities of the same length $\{p_1, p_2, \ldots, p_n\}$ whose sum is 1, and standard deviation $\sigma$. Under the add adjacency, given $D$ the mechanism outputs a sample from $N(\boldzero, \sigma^2 \cdot \mathbb{I})$ and given $D'$ it samples from the same normal centered at the sensitivity random variable, i.e. it samples $i$ according to the probability distribution given by $\{p_i\}$ and outputs a sample from $N(\bfc_i, \sigma^2 \cdot \mathbb{I})$. Under the remove adjacency it is defined symmetrically.
\end{defn}

Lemma 4.6 in \cite{choquettechoo2023privacy} proves the following ''vector-to-scalar reduction'' statement about dominating pairs for MoG mechanisms:

\begin{lem}\label{lem:vector-to-scalar}
Let $\bfc, c$ be a vector and scalar random variable such that the random variable $\ltwo{\bfc}$ is stochastically dominated by $c$. Then under both the add and remove adjacencies, the MoG mechanism with random sensitivity $\bfc$ is dominated by the MoG mechanism with random sensitivity $c$.
\end{lem}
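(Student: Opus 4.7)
The plan is to combine a post-processing embedding that simultaneously absorbs the stochastic-dominance enlargement with a matched-norm vector-to-scalar reduction proved via a Gaussian comparison inequality. It will suffice to prove the add-adjacency case; the remove case follows from the identity $H_\alpha(P, Q) = (1 - \alpha) + \alpha H_{1/\alpha}(Q, P)$, which transports an add-side hockey-stick bound for all $\alpha \geq 0$ into the analogous remove-side bound at the reciprocal parameter.

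By Strassen's theorem, the hypothesis $\ltwo{\bfc} \preceq c$ yields a coupling of $(\bfc, c)$ under which $\ltwo{\bfc} \leq c$ almost surely, preserving the given marginals. For a discrete mixture $\{(\bfc_i, c_i, p_i)\}_{i=1}^n$ in the coupled support, augment the ambient dimension by $n$ fresh orthonormal directions $e_1', \ldots, e_n'$ and define
\[
\bfc_i' := \Bparen{\bfc_i,\; \sqrt{c_i^2 - \ltwo{\bfc_i}^2}\, e_i'} \in \R^{d+n},
\]
so that $\ltwo{\bfc_i'} = c_i$ while $\ip{\bfc_i'}{\bfc_j'} = \ip{\bfc_i}{\bfc_j}$ for $i \neq j$. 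Let $\calM'_{vec}$ be the vector MoG in $\R^{d+n}$ with sensitivities $\{\bfc_i'\}$. Projection onto the first $d$ coordinates is a data-independent post-processing mapping $\calM'_{vec}$ to the original vector MoG $\calM_{vec}$, so any dominating pair for $\calM'_{vec}$ also dominates $\calM_{vec}$. It thus suffices to show $\calM'_{vec}$ is dominated by the scalar MoG $\calM_{sca}$ with sensitivities $\{c_i\}$, which is the matched-norm version of the claim.

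For the matched-norm case, the add-adjacency divergence expands as
\[
H_\alpha\Bparen{N(\boldzero, \sigma^2 I_{d+n}),\, {\textstyle \sum_i} p_i N(\bfc_i', \sigma^2 I_{d+n})} = \E_{x}\!\sparen{\max\!\Bparen{1 - \alpha \sum_i p_i \exp\!\Bparen{\tfrac{\ip{x}{\bfc_i'}}{\sigma^2} - \tfrac{c_i^2}{2\sigma^2}},\; 0}},
\]
with $x \sim N(\boldzero, \sigma^2 I_{d+n})$. Setting $Y_i := \ip{x}{\bfc_i'}$, the vector $Y$ is centered Gaussian with covariance $\Sigma_{ij} = \sigma^2 \ip{\bfc_i'}{\bfc_j'}$, while its scalar analogue $Y_i' := c_i y$ (for $y \sim N(0, \sigma^2)$) is centered Gaussian with covariance $\Sigma'_{ij} = \sigma^2 c_i c_j$. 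Both have matching diagonal $\sigma^2 c_i^2$, and Cauchy-Schwarz gives $\ip{\bfc_i'}{\bfc_j'} \leq c_i c_j$ off-diagonal. Define $F(y) := \max\{1 - \alpha \sum_i p_i e^{y_i/\sigma^2 - c_i^2/(2\sigma^2)},\; 0\}$; the inner sum is coordinate-separable so its mixed partials vanish, and composing with the convex decreasing outer map produces non-negative distributional mixed second partials of $F$ concentrated on the kink set $\{\alpha \sum_i p_i e^{\cdots} = 1\}$. A Gaussian interpolation along $\Sigma_t := (1-t)\Sigma + t\Sigma'$, combined with the Stein identity $\frac{d}{dt}\E[F(Y_t)] = \tfrac{1}{2}\sum_{i,j} \dot\Sigma_{ij}\,\E[\partial_i\partial_j F(Y_t)]$, then yields $\E[F(Y)] \leq \E[F(Y')]$. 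General (non-discrete) sensitivity distributions follow by a weak-approximation argument.

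The hardest step will be justifying the Gaussian comparison for the non-smooth $F$: the kink contribution to $\partial_i \partial_j F$ must be handled by mollifying the positive part, showing the mollified cross-partials are pointwise non-negative, and passing the inequality through in the limit. All other steps---Strassen coupling, the projection post-processing, the Cauchy-Schwarz covariance comparison, and the remove-side symmetry identity---are routine.
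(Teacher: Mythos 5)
The paper does not actually prove \cref{lem:vector-to-scalar}: it imports the statement as Lemma 4.6 of \cite{choquettechoo2023privacy}. Your argument is therefore a self-contained replacement rather than a variant of anything in this note, and as far as I can check it is correct, by a route that is genuinely different from the hands-on mixture manipulations in the cited source. Your two structural moves are both sound. First, the orthogonal-padding embedding $\bfc_i \mapsto \bfc_i' = \bparen{\bfc_i, \sqrt{c_i^2 - \ltwo{\bfc_i}^2}\, e_i'}$ applied under a Strassen coupling converts stochastic dominance into exact norm matching, at the cost of a coordinate projection that is harmless by the data-processing inequality for hockey-stick divergences. Second, the matched-norm case really is a Slepian-type comparison: $H_\alpha$ of the pair depends on the output only through the score vector $Y_i = \ip{x}{\bfc_i'}$; the integrand $\max\{1 - \alpha \sum_i p_i e^{Y_i/\sigma^2 - c_i^2/(2\sigma^2)}, 0\}$ is a convex nonincreasing function of a coordinate-separable sum of increasing terms, so its off-diagonal mixed second partials are nonnegative after mollifying the positive part; and passing from $\Sigma_{ij} = \sigma^2 \ip{\bfc_i'}{\bfc_j'}$ to $\Sigma'_{ij} = \sigma^2 c_i c_j$ raises every off-diagonal entry (Cauchy--Schwarz) while fixing the diagonal, so the interpolation derivative has the right sign. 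The add-to-remove conversion via $H_\alpha(P, Q) = (1-\alpha) + \alpha H_{1/\alpha}(Q, P)$ is also correct. Three small points deserve explicit care in a full write-up: you should note that $c \geq 0$ almost surely (it stochastically dominates a nonnegative variable, so the scalar MoG is well-posed with nonnegative sensitivities); the target covariance $\sigma^2 c c^\top$ is rank one, so the interpolation identity must be invoked in a form valid for singular covariances (perturb by $\epsilon I$, or realize the interpolation as $\sqrt{1-t}\,Y + \sqrt{t}\,Y'$ with independent copies); and the mollification limit goes through because the integrand is uniformly bounded so dominated convergence applies on both sides. A pleasant byproduct of your proof is that it explains the tightness discussion in \cref{sec:poisson}: the scalar mechanism is exactly the aligned-gradient instance, which simultaneously maximizes all pairwise correlations of the scores, so no slack is introduced anywhere in the chain.
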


This lemma is useful because (1) we can easily compute the CDF of the privacy loss random variable for scalar MoG mechanisms with non-negative sensitivities (see \cite{choquettechoo2023privacy} for a detailed discussion), and thus to perform exact (up to numerical approximation) PLD accounting for these MoG mechanisms, and (2) as we will see in \cref{sec:poisson}, the application of this lemma to gradient queries is tight in general, i.e. for some instantiation of the gradient query the ``vector-to-scalar reduction'' preserves the privacy loss distribution (rather than worsening it). PLD accounting for (scalar) MoG mechanisms is currently supported by the open-source \texttt{dp\_accounting} \cite{pldlib} Python library\footnote{In particular, this accounting can be done using the privacy accountant \texttt{PLDAccountant} and the ``DP event'' \texttt{MixtureOfGaussiansDpEvent}. See \url{https://github.com/google/differential-privacy/blob/main/python/dp_accounting/dp_accounting/pld/pld_privacy_accountant.py} and \url{https://github.com/google/differential-privacy/blob/main/python/dp_accounting/dp_accounting/dp_event.py} respectively.}.
\section{Reducing DP-SGD with Sampling to MoG Mechanisms}\label{sec:poisson}

\subsection{Poisson Sampling}

Recall that in Poisson sampling $D_i$ in \eqref{eq:dp-sgd} is chosen by including each example in $D$ independently with some probability $q$. In this section we prove the following theorem on group-level DP guarantees for DP-SGD:

\begin{thm}\label{thm:poisson}
The output distributions of the composition of $T$ scalar MoG mechanisms under the add (resp. remove) adjacency with random sensitivity $Binom(k, q)$ and standard deviation $\sigma$ are a dominating pair for $T$-round DP-SGD with noise multiplier $\sigma$ and Poisson sampling with sampling probability $q$ under $A_{add, k}$ (resp. $A_{rem, k}$).
\end{thm}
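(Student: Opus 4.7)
The plan is to invoke the adaptive composition lemma (\cref{lem:adaptive-composition}) round-by-round, exhibiting, for every round of DP-SGD and for every possible adaptive history, the same scalar MoG with sensitivity $\text{Binom}(k,q)$ and noise standard deviation $\sigma$ as a per-round dominating pair. The two ingredients I would assemble are: (i) recognize each round of DP-SGD, conditional on $\theta_{i-1}$, as (up to a bijective rescaling) a vector MoG mechanism; and (ii) apply the vector-to-scalar reduction (\cref{lem:vector-to-scalar}) to collapse that vector MoG to the claimed scalar MoG.

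For (i), fix any $(D, D') \in A_{add,k}$ with $D \subseteq D'$ and let $d_1, \ldots, d_m$ enumerate the $m \leq k$ added examples. Conditioning on $\theta_{i-1}$, the round-$i$ output is Gaussian with covariance $\eta^2 \sigma^2 L^2 \mathbb{I}$; its mean under $D'$ differs from its mean under $D$ only by $-\eta \sum_{j \in S} \clip{\nabla \ell(\theta_{i-1}, d_j)}{L}$, where $S \subseteq [m]$ is the Poisson-sampled subset (each index included independently with probability $q$). Subtracting the common mean and rescaling by $1/(\eta L)$ is an affine bijection and hence preserves every hockey-stick divergence, so, after this post-processing, round $i$ is exactly the vector MoG mechanism with noise standard deviation $\sigma$ and random sensitivity $\mathbf{c} = \sum_{j \in S} \clip{\nabla \ell(\theta_{i-1}, d_j)}{1}$.

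For (ii), the triangle inequality together with $\ltwo{\clip{\mathbf{g}}{1}} \leq 1$ gives $\ltwo{\mathbf{c}} \leq |S|$ pointwise, and $|S| \sim \text{Binom}(m, q)$ is stochastically dominated by $\text{Binom}(k,q)$ since $m \leq k$. Chaining these, $\ltwo{\mathbf{c}}$ is stochastically dominated by $\text{Binom}(k,q)$, so \cref{lem:vector-to-scalar} dominates the per-round vector MoG (under $A_{add,k}$) by the scalar MoG with sensitivity $\text{Binom}(k,q)$ and noise standard deviation $\sigma$. The argument under $A_{rem,k}$ is completely symmetric, with the up-to-$k$ removed examples playing the role of the added ones and the add/remove directions of the MoG definition swapped.

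The key observation enabling adaptive composition is that the dominating scalar MoG depends only on $k$, $q$, and $\sigma$, and not on $\theta_{i-1}$, the adaptively chosen loss, or which $m \leq k$ examples are involved; this is exactly the uniformity hypothesis of \cref{lem:adaptive-composition}. Composing the per-round dominating pair across the $T$ rounds then yields the theorem. The main piece of the write-up that needs care is the bookkeeping around the affine rescaling used to match the MoG definition's noise standard deviation to DP-SGD's; once that is made explicit, each remaining step is a direct invocation of \cref{lem:vector-to-scalar} and \cref{lem:adaptive-composition}.
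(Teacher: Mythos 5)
Your overall route is the same as the paper's: condition on $\theta_{i-1}$, identify the round as a vector MoG mechanism after recentering and rescaling, bound the norm of the random sensitivity via the triangle inequality, stochastically dominate $Binom(m,q)$ by $Binom(k,q)$, collapse to a scalar MoG via \cref{lem:vector-to-scalar}, and compose adaptively via \cref{lem:adaptive-composition}. The one step that does not survive scrutiny as written is the claim that, conditional on $\theta_{i-1}$, ``the round-$i$ output is Gaussian'' and that the outputs under $D$ and $D'$ differ by a ``common mean'' that can be subtracted off by an affine bijection. The contribution $-\eta\sum_{d\in D_i\cap D}\clip{\nabla \ell(\theta_{i-1},d)}{L}$ of the shared examples is itself random (it depends on which elements of $D$ Poisson sampling selects), so the round-$i$ output under $D$ is a mixture of Gaussians over all subsets of $D$, not a single Gaussian, and ``subtracting the common mean'' is not a data-independent post-processing of the output --- it would require knowing the realized batch. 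As stated, the per-round pair is therefore not literally the vector MoG mechanism whose sensitivity is indexed only by the sampled subset of the $m$ added examples.

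The missing ingredient is a conditioning step: because Poisson sampling makes the distribution of $D\cap D_i$ identical under $D$ and under $D'$, both output distributions can be written as mixtures over the event $\{D\cap D_i=S\}$ with \emph{matched} mixture weights, and then one applies quasi-convexity of the hockey-stick divergence (\cref{lem:quasi-convexity} in the paper), namely $H_\alpha\left(\sum_S w_S P_S,\ \sum_S w_S Q_S\right)\le\max_S H_\alpha(P_S,Q_S)$. Only after conditioning on $S$ is the recentering map a fixed affine bijection, and only then is the conditional pair exactly the vector MoG you describe; the quasi-convexity step is what transfers domination of each conditional pair back to the unconditional pair, which is what \cref{lem:adaptive-composition} requires. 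This is a genuinely needed lemma rather than bookkeeping --- the matched weights are essential, and indeed for fixed batch size sampling the naive conditioning on $D\cap D_i$ does not yield matched weights and the paper must construct a coupling instead --- so it should be stated and invoked explicitly. Everything else in your outline matches the paper's proof.
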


Via the tools described in \cref{sec:mog}, this implies a procedure for computing tight group-level DP guarantees for DP-SGD with Poisson sampling: compute the PLD for a single MoG mechanism with the given sensitivities and probabilities, convolve it with itself $T$ times, and convert the resulting PLD to a $(\epsilon, \delta)$-DP guarantee.

\begin{proof}
For simplicity of presentation, we will assume all gradients have norm at most $L$, i.e. the clipping operator is the identity and can be dropped from \eqref{eq:dp-sgd}. We focus only on the add-up-to-$k$ adjacency $A_{add, k}$, a proof for $A_{rem, k}$ follows symmetrically.

We will show that in round $i$, for any fixed $\theta_{i-1}$, the pair of output distributions of $\theta_i$ given $D$ and $D'$ is dominated by the output distributions of a scalar MoG mechanism. Since this domination statement holds for any (adaptively chosen) $\theta_{i-1}$, the theorem holds by \cref{lem:adaptive-composition}.

Let $G$ be the group of at most $k$ sensitive examples, i.e. $D' = D \cup G$ and $D \cap G = \emptyset$. The distribution of $D \cap D_i$, i.e. the subset of $D$ included in $D_i$, is the same whether we sample from $D$ or $D'$. We will use the following ``quasi-convexity'' property of $(\epsilon, \delta)$-DP to treat $D \cap D_i$ as a constant in the analysis.

\begin{lem}\label{lem:quasi-convexity}
Let $w_1, w_2, \ldots, w_n \geq 0$ be probabilities summing to 1. Given distributions $\{P_i\}, \{Q_i\}$, let $P = \sum_i w_i P_i$ and $Q = \sum_i w_i Q_i$. Then for any $\alpha \geq 0$:

\[H_\alpha(P, Q) \leq \max_i H_\alpha(P_i, Q_i).\]
\end{lem}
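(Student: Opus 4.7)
The plan is to prove this by a straightforward pointwise convexity argument on the integrand defining $H_\alpha$. Recall that $H_\alpha(P,Q) = \int \max\{P(x) - \alpha \cdot Q(x), 0\}\,dx$, so the key observation is that the map $(p,q) \mapsto \max\{p - \alpha q, 0\}$ is convex in $(p,q)$, since it is the pointwise maximum of the linear function $p - \alpha q$ and the constant $0$.

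First I would expand $P(x) = \sum_i w_i P_i(x)$ and $Q(x) = \sum_i w_i Q_i(x)$ inside the integrand, writing
\[
\max\{P(x) - \alpha Q(x), 0\} = \max\Bigl\{\sum_i w_i \bigl(P_i(x) - \alpha Q_i(x)\bigr), 0\Bigr\}.
\]
By Jensen's inequality (or equivalently by convexity of $t \mapsto \max\{t,0\}$ combined with nonnegativity of the $w_i$'s that sum to $1$), this is upper bounded by $\sum_i w_i \max\{P_i(x) - \alpha Q_i(x), 0\}$. Integrating both sides in $x$ and using linearity of integration gives
\[
H_\alpha(P, Q) \leq \sum_i w_i H_\alpha(P_i, Q_i).
\]

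Finally, since $\sum_i w_i = 1$ and $w_i \geq 0$, any convex combination is at most the maximum, so $\sum_i w_i H_\alpha(P_i, Q_i) \leq \max_i H_\alpha(P_i, Q_i)$, completing the proof. I do not anticipate any real obstacle here; the only mildly subtle point is making sure the integrand's convexity in $(p,q)$ is invoked correctly (rather than only convexity in one coordinate), but since $\max\{\cdot,0\}$ composed with a linear map is convex jointly, the pointwise step is immediate.
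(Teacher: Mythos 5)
Your proof is correct and follows essentially the same route as the paper's: expand the mixture inside the integrand, bound $\max\{\sum_i w_i t_i, 0\} \le \sum_i w_i \max\{t_i, 0\}$ (the paper phrases this as ``the max of sums is at most the sum of maxes,'' you phrase it as Jensen applied to the convex function $t \mapsto \max\{t,0\}$ --- the same fact), integrate, and bound the convex combination by the maximum. No gaps.
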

\begin{proof}
We have:

\begin{align*}
H_\alpha(P, Q) &= \int \max\{P(x) - \alpha Q(x), 0\} dx = \int \max\{\sum_i w_i (P_i(x) - \alpha Q_i(x)), 0\} dx\\
&\stackrel{(\ast_1)}{\leq} \int \sum_i w_i \max\{P_i(x) - \alpha Q_i(x), 0\} dx = \sum_i w_i \int \max\{P_i(x) - \alpha Q_i(x), 0\} dx\\
&= \sum_i w_i H_\alpha(P_i, Q_i) \stackrel{(\ast_2)}{\leq} \max_i H_\alpha(P_i, Q_i).\\
\end{align*}

$(\ast_1)$ is the observation that $\max\{a+c, b+d\} \leq \max\{a, b\} + \max\{c, d\}$ i.e. ``the max of sums is less than the sum of maxes'' and $(\ast_2)$ holds because the $w_i$ are non-negative and sum to 1.
\end{proof}

Let $\Theta_i, \Theta_i'$ denote the distribution of $\theta_i$ given $D$ and $D'$ respectively and let $\Theta_i | S, \Theta_i' | S$ denote these distributions conditioned on the event $D \cap D_i = S$. Then we can write $\Theta_i = \sum_S q^{|S|}(1-q)^{|D|-|S|} \cdot (\Theta_i | S)$ and $\Theta_i' = \sum_S q^{|S|}(1-q)^{|D|-|S|} \cdot (\Theta_i' | S)$. By \cref{lem:quasi-convexity}, we then have for all $\alpha$, $H_\alpha(\Theta_i, \Theta_i') \leq \max_S H_\alpha(\Theta_i | S, \Theta_i' | S)$. So we just need to show that for any $S$, the pair $\Theta_i | S, \Theta_i' | S$ is dominated by a scalar MoG mechanism. We can write the exact form of each distribution for a fixed choice of $S$:

\[\Theta_i | S = N\left(\theta_{i-1} - \eta \sum_{d \in S} \nabla \ell(\theta_{i-1}, d), \eta^2 \sigma^2 L^2 \cdot \mathbb{I}\right), \]
\[\Theta_i' | S = \sum_{H \subseteq G} q^{|H|} (1-q)^{|G| - |H|} N\left(\theta_{i-1} - \eta \sum_{d \in S} \nabla \ell(\theta_{i-1}, d) - \eta \sum_{d \in H} \nabla \ell(\theta_{i-1}, d), \eta^2 \sigma^2 L^2 \cdot \mathbb{I}\right). \]

Distinguishing these distributions is equivalent to distinguishing them after the linear transformation $f(x) = - \frac{x - \theta_{i-1} - \eta \sum_{d \in S} \nabla \ell(\theta_{i-1}, d)}{\eta L}$. The linear transformations have distributions:

\[f(\Theta_i | S) = N\left(\boldzero, \sigma^2 \cdot \mathbb{I}\right), \]
\[f(\Theta_i' | S) = \sum_{H \subseteq G} q^{|H|} (1-q)^{|G| - |H|} N\left(\sum_{d \in H} \nabla \ell(\theta_{i-1}, d) / L, \sigma^2 \cdot \mathbb{I}\right). \]

These are the output distributions of a vector MoG mechanism with sensitivities $\{\sum_{d \in H} \nabla \ell(\theta_{i-1}, d) / L\}_{H \subseteq G}$ and associated probabilities $\{q^{|H|} (1-q)^{|G| - |H|}\}_{H \subseteq G}$. By triangle inequality, we have $\ltwo{\sum_{d \in H} \nabla \ell(\theta_{i-1}, d) / L} \leq \sum_{d \in H} \ltwo{\nabla \ell(\theta_{i-1}, d) / L} \leq |H|$. $|H|$ is the random variable $Binom(|G|, q)$, which is stochastically dominated by $Binom(k, q)$. So, by \cref{lem:vector-to-scalar}, for all $S$ the pair $\theta_i | S, \theta_i' | S$ (and thus the pair $\theta_i, \theta_i'$) is dominated by the scalar MoG mechanism with sensitivities and probabilities defined by $Binom(k, q)$, completing the proof.
\end{proof}

\subsection{Fixed Batch Size Sampling}

While Poisson sampling is easier to analyze, it is more common to use a fixed batch size. Here we provide an analysis for fixed batch size sampling, i.e. for some fixed batch size $B$, $D_i$ is a uniformly random subset of $D$ (or $D'$) of size $B$.

Let $Hypergeom(B, n+k, k)$ denote the hypergeometric random variable equal to the distribution of the number of black balls drawn from a bag with $k$ black balls and $n$ white balls if we draw $B$ balls without replacement. We show the following:

\begin{thm}\label{thm:minibatch}
The output distributions of the composition of $T$ scalar MoG mechanisms under the add (resp. remove) adjacency with sensitivities and probabilities defined by 2 times $Hypergeom(B, n+k, k)$ and standard deviation $\sigma$ are a dominating pair for $T$-round DP-SGD with noise multiplier $\sigma$ and fixed batch size sampling with batch size $B$ on datasets of size at least $n$ under $A_{add, k}$ (resp. under $A_{rem, k}$).
\end{thm}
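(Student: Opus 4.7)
The plan is to follow the strategy of Theorem~\ref{thm:poisson} but replace its ``condition on $D \cap D_i$'' step by a coupling argument, since under fixed batch size sampling the marginal distribution of $D \cap D_i$ is \emph{not} the same under $D$ and $D'$. As before, by Lemma~\ref{lem:adaptive-composition} it suffices to fix an arbitrary $\theta_{i-1}$ and exhibit a dominating pair for the single-round output distributions $\Theta_i$ (under $D$) vs.\ $\Theta_i'$ (under $D'$). Write $D' = D \cup G$ with $|G| \leq k$, $D \cap G = \emptyset$, and $m := |D| \geq n$; as in the Poisson case we may assume all gradients have norm at most $L$.

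The key new ingredient is a coupling of the batches from $D$ and $D'$. Sample $S$ uniformly from the $B$-subsets of $D$, independently sample $h \sim Hypergeom(B, m+|G|, |G|)$, and then (given $h$) sample a uniformly random $h$-subset $R$ of $S$ and a uniformly random $h$-subset $H$ of $G$; set $T := (S \setminus R) \cup H$. A short combinatorial calculation (essentially the identity $\binom{m-B+h}{h}/\bigl[\binom{m}{B}\binom{B}{h}\bigr] = 1/\binom{m}{B-h}$ applied to each candidate value of $T \cap D$) shows that the marginal of $T$ is uniform over $B$-subsets of $D'$, so $S$ may serve as the batch under $D$ and $T$ as the batch under $D'$ without altering either output distribution. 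Crucially, $S$ is marginally uniform over $B$-subsets of $D$ under \emph{both} coupled processes, enabling us to apply quasi-convexity (Lemma~\ref{lem:quasi-convexity}) conditioning on $S$ just as in the Poisson proof.

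Conditional on $S$, $\Theta_i \mid S$ is a single Gaussian centered at $\theta_{i-1} - \eta \sum_{d \in S}\nabla \ell(\theta_{i-1}, d)$, while $\Theta_i' \mid S$ is a mixture over $(h, R, H)$ of Gaussians shifted from this center by $-\eta\bigl(\sum_{d \in H}\nabla \ell(\theta_{i-1},d) - \sum_{d \in R}\nabla \ell(\theta_{i-1},d)\bigr)$. Applying the same linear transformation used in the proof of Theorem~\ref{thm:poisson} reduces this conditional pair to a vector MoG mechanism with sensitivities $\mathbf{c}_{h,R,H} := \bigl(\sum_{d \in R}\nabla\ell(\theta_{i-1},d) - \sum_{d \in H}\nabla\ell(\theta_{i-1},d)\bigr)/L$ satisfying $\|\mathbf{c}_{h,R,H}\| \leq |R| + |H| = 2h$ by triangle inequality and the gradient norm bound. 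Since $Hypergeom(B, m+|G|, |G|)$ is stochastically dominated by $Hypergeom(B, n+k, k)$ whenever $m \geq n$ and $|G| \leq k$ (verified by two monotone couplings---one for varying $m$ with $|G|$ fixed, one for varying $|G|$ with $m$ fixed), $\|\mathbf{c}\|$ is stochastically dominated by $2 \cdot Hypergeom(B, n+k, k)$; Lemma~\ref{lem:vector-to-scalar} then yields the claimed scalar MoG dominating pair for a single round, and composition over $T$ rounds via Lemma~\ref{lem:adaptive-composition} completes the proof. The main obstacle is discovering and verifying the correct coupling---in particular, realizing that the ``extra'' factor of $2$ in $2 \cdot Hypergeom$ comes precisely from the fact that inserting $h$ elements of $G$ into a fixed-size batch forces the removal of $h$ elements of $D$; everything else reduces to the routine combinatorial identity and monotonicity check described above.
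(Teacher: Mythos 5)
Your proposal is correct and follows essentially the same route as the paper's proof: the paper uses the identical coupling (sampling an ordered $B$-subset $S$ of $D$, drawing $h \sim Hypergeom(B, |D|+|G|, |G|)$, and replacing $h$ elements of $S$ by a random $h$-subset of $G$), then conditions on $S$, applies \cref{lem:quasi-convexity}, bounds the per-mixture-component sensitivity by $2h$ via the triangle inequality, and invokes stochastic domination plus \cref{lem:vector-to-scalar}. Your explicit verification that the coupled batch $T$ is marginally uniform over $B$-subsets of $D'$ is a detail the paper asserts without computation, so your write-up is, if anything, slightly more complete.
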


\begin{proof}
Similarly to \cref{thm:poisson} it suffices to show for any fixed $\theta_{i-1}$, the output distributions of $\theta_i$ given $D$ and $D'$ are dominated by scalar MoG mechanism with sensitivities and probabilities defined by 2 times $Hypergeom(B, n+k, k)$. Again for simplicity we focus on $A_{add,k}$ as $A_{rem,k}$ follows by a symmetric argument.

Again let $G$ be the group of at most $k$ sensitive examples, i.e. $D' = D \cup G$ and $D \cap G = \emptyset$. 
Also again let $\Theta_i, \Theta_i'$ denote the distribution of $\theta_i$ given $D$ and $D'$ respectively.
We will again evoke \cref{lem:quasi-convexity} by writing $\Theta_i$ and $\Theta_i'$ as mixtures. However, unlike for Poisson sampling the distribution of $D \cap D_i$ is not the same when sampling from $D$ as when sampling from $D'$, so $D \cap D_i$ is not the right object to condition on in defining the mixture. Instead, consider the following process that samples a uniformly random subset of $B$ examples from $D'$:

\begin{enumerate}
    \item Sample $S = {s_1, s_2, \ldots, s_B}$, a uniformly random ordered subset of $B$ examples from $D$.
    \item Sample $m \sim Hypergeom(B, |D|+|G|, |G|)$.
    \item Sample $H = {t_1, t_2, \ldots, t_m}$, a uniformly random ordered subset of $m$  examples from $G$, and replace the first $m$ examples of $S$ with $H$.
\end{enumerate}

The ordering of $S$ and $H$ is not necessary, but simplifies the presentation later. We use the above process to define the mixtures: let $\Theta_i | S$ denote $\Theta_i$ conditioned on $D_i = S$ (where $S$ is again ordered) and let $\Theta_i' | S$ denote $\Theta_i'$ conditioned on the choice of $S$ in step 1 in the above process. Then we can write $\Theta_i = \sum_S \frac{1}{\binom{n}{B} \cdot B!} (\Theta_i | S)$ and 
$\Theta_i' = \sum_S \frac{1}{\binom{n}{B} \cdot B!} (\Theta_i' | S)$, and thus apply \cref{lem:quasi-convexity} to conclude $H_\alpha(\Theta_i, \Theta_i') \leq \max_S H_\alpha(\Theta_i | S, \Theta_i' | S)$. It now suffices to show $\Theta_i | S, \Theta_i' | S$ are dominated by the MoG mechanism with sensitivities and probabilities defined by 2 times $Hypergeom(B, n+k, k)$.

For any $S$, using the same linear transformation $f$ as in \cref{thm:poisson}, we have:

\[f(\Theta_i | S) = N\left(\boldzero, \sigma^2 \cdot \mathbb{I}\right), \]
\[f(\Theta_i' | S) = \sum_{H \subseteq G} Pr[Hypergeom(B, |D|+|G|, |G|) = |H|] \cdot \frac{(|G|-|H|)!}{|G|!}\cdot\]
\[N\left(\sum_{j \in [|H|]} \left(\nabla \ell(\theta_{i-1}, t_j) - \nabla \ell(\theta_{i-1}, s_j) \right)/ L, \sigma^2 \cdot \mathbb{I}\right). \]

By triangle inequality we have $\sum_{j \in [|H|]} \left(\nabla \ell(\theta_{i-1}, t_j) - \nabla \ell(\theta_{i-1}, s_j) \right)/ L \leq 2 |H|$. $|H|$ is drawn from $Hypergeom(B, |D|+|G|, |G|)$. Since $|D| \geq n$ and $|G| \leq k$, $|H|$ is stochastically dominated by $Hypergeom(B, n+k, k)$, and so by \cref{lem:vector-to-scalar} the pair $\Theta_i | S, \Theta_i' | S$ is dominated by the MoG mechanism corresponding to with sensitivities and probabilities defined by 2 times $Hypergeom(B, n+k, k)$ as desired.
\end{proof}

\subsection{Tightness}

Both \cref{thm:poisson} and \cref{thm:minibatch} are easily seen to be tight, i.e. the $(\epsilon, \delta$)-tradeoff curve given by PLD analysis of the composition of MoG mechanisms is the exact $(\epsilon, \delta)$-DP characterization of an instance of DP-SGD on a certain loss function and pair of databases, assuming an adversary gets to view every iterate, i.e. every $\theta_i$. 

Let $L = 1$ for simplicity. In the case of Poisson sampling and \cref{thm:poisson}, consider the one-dimensional loss function which is $0$ for all examples except the $k$ sensitive examples, where the loss function is $\ell(\theta; d) = -\theta$, i.e. has gradient -1 everywhere. Then in each round the distributions of $\theta_i - \theta_{i-1}$ for $D$ and $D'$ are exactly the distributions of a MoG mechanism with random sensitivity $Binom(k, q)$ and the same noise standard deviation, i.e. the PLD of the composition of $T$ MoG mechanisms is an exact characterization of the privacy loss of DP-SGD for these database and loss functions.

For \cref{thm:minibatch}, a similar argument holds: Consider the one-dimensional loss function that is $\theta$ for all examples except for the $k$ sensitive examples, for which it is $-\theta$. Then the distributions of $\theta_i - \theta_{i-1} + B$ for $D$ and $D'$ are exactly the distributions of a MoG mechanism with random sensitivity $Hypergeom(B, n+k, k)$ under the add-up-to-$k$ adjacency, i.e. again the PLD of the composition of $T$ MoG mechanisms is an exact characterization of the privacy loss random variable.

The only possible improvement in the analysis (without further assumptions on the loss function / data distribution, e.g. convexity or using projection into a constraint set) is thus to use a ``last-iterate'' analysis, i.e. analyze the privacy guarantees of outputting only $\theta_T$ instead of all $\theta_i$. We conjecture for general non-convex losses that the improvement from ``every-iterate'' to ``last-iterate'' is marginal, and may actually be zero. To the best of our knowledge, all privacy analyses of DP-SGD that improve by using a last-iterate analysis (e.g. \cite{feldman2018privacy, ChourasiaYS21, altschuler2022privacy}) require further assumptions such as convexity, so disproving this conjecture would likely require substantially novel techniques and insights.
\section{Empirical Results}\label{sec:empirical}

In this section we plot the $\epsilon$ values computed empirically using \cref{thm:poisson} and \cref{thm:minibatch} in a common benchmark setting, and give a comparison to those computed using the conversion of \cref{lem:vadhan} and a naive lower bound.

We consider the setting of training on CIFAR-10 for 20 epochs, each epoch taking 100 iterations, as considered in e.g. \cite{choquette-choo2023amplified}. This corresponds to $T = 2000$ rounds of DP-SGD with sampling probability $q = .01$. With fixed batch size sampling specifically, this corresponds to $B = 500$ since there are $n = 50000$ training examples. We set our target $\delta$ to be $10^{-6}$, vary the noise multiplier $\sigma$ and group size $k$, and plot (1) the $\epsilon$ computed by PLD accounting for MoG mechanisms (``Our analysis'') and by (2) converting  an exact example-level DP guarantee to a group-level DP guarantee via \cref{lem:vadhan} (``Vadhan analysis''). We use the \texttt{PLDAccountant} class in \texttt{dp\_accounting} for (1) and the \texttt{tensorflow\_privacy} implementation of (2). We can also consider (3) a linear lower bound of $k \epsilon_1$, where $\epsilon_1$ is the privacy parameter for a single example at the same $\delta$ (``Lower bound''), to get a sense for how quickly $\epsilon$ computed by each approach grows relative to linear growth. In \cref{fig:poisson}, for each $\sigma$ we plot $\epsilon$ as a function of $k$ for each approach when using Poisson sampling. In \cref{fig:minibatch} for completeness we do the same but for fixed batch size sampling and a doubled noise multiplier, although the two settings have nearly the same $\epsilon$ as a function of $k$ (if we double the noise multiplier for fixed batch size sampling) since we have $n \gg B \gg k$, i.e. $Binom(n, k) \approx Hypergeom(B, n+k, k)$. 

\begin{figure}
    \centering
    \includegraphics[width=.3\linewidth]{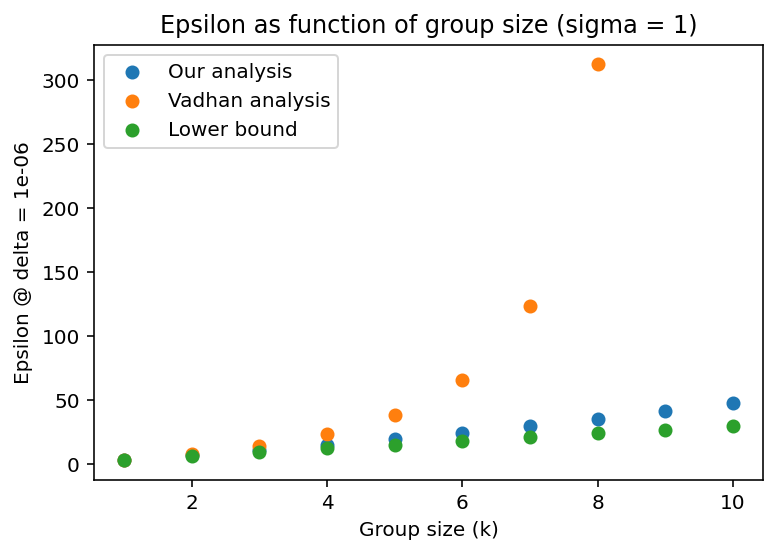}
    \includegraphics[width=.3\linewidth]{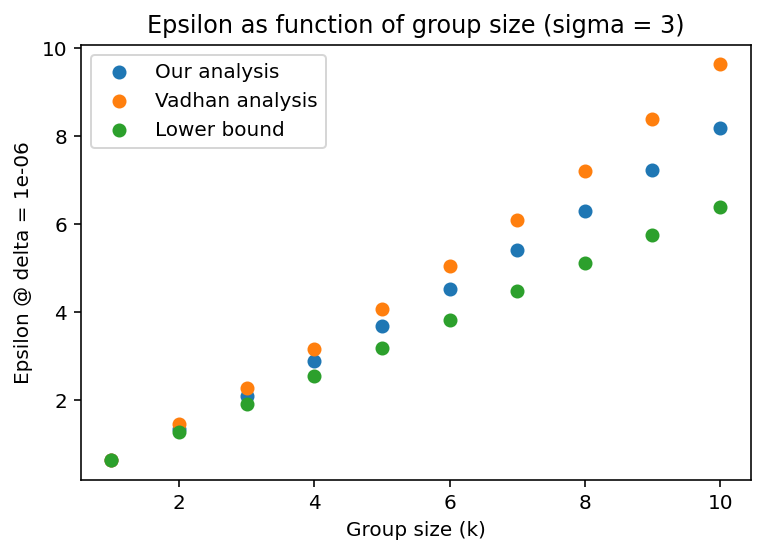}
    \includegraphics[width=.3\linewidth]{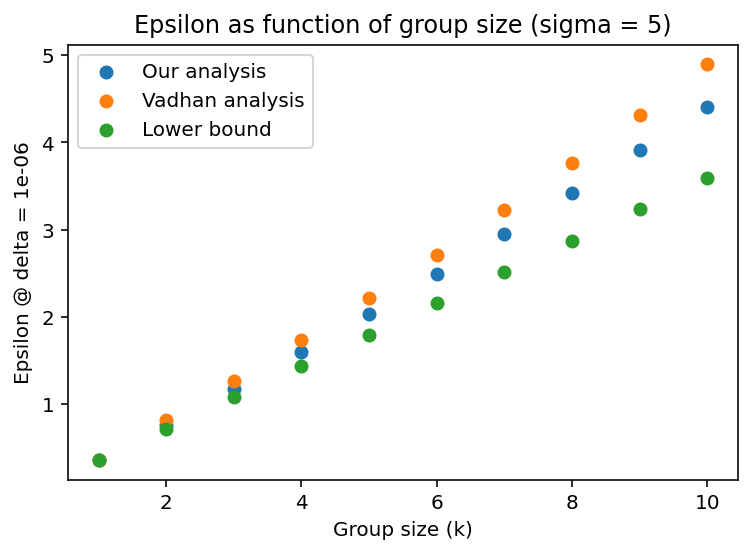}
    \caption{$\epsilon$ as a function of $k$ for Poisson sampling. We use $T = 2000, q = 1/100$. ``Our analysis'' is $\epsilon$ computed using the PLD of the dominating pair in \cref{thm:poisson}. ``Vadhan analysis'' computes an example-level DP guarantee and uses \cref{lem:vadhan}. ``Lower bound'' takes the example-level  $(\epsilon_1, \delta)$-DP guarantee and multiplies $\epsilon_1$ by $k$ to get a lower bound on the true $\epsilon$.}
    \label{fig:poisson}
\end{figure}

\begin{figure}
    \centering
    \includegraphics[width=.3\linewidth]{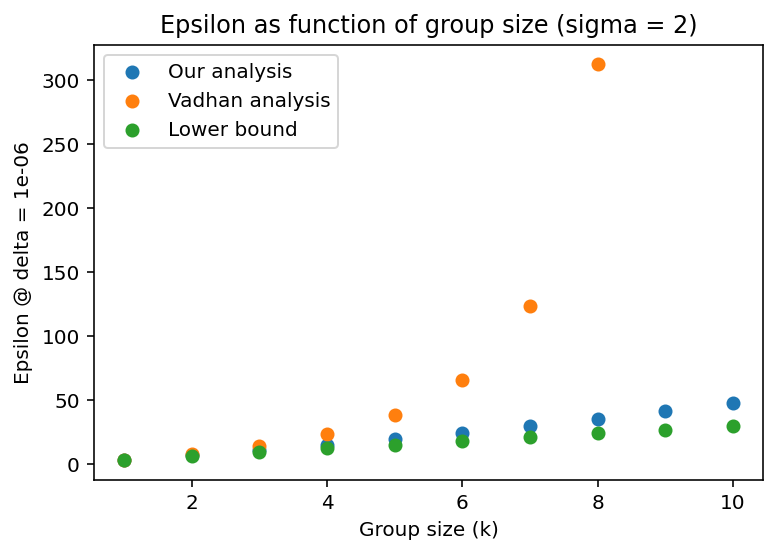}
    \includegraphics[width=.3\linewidth]{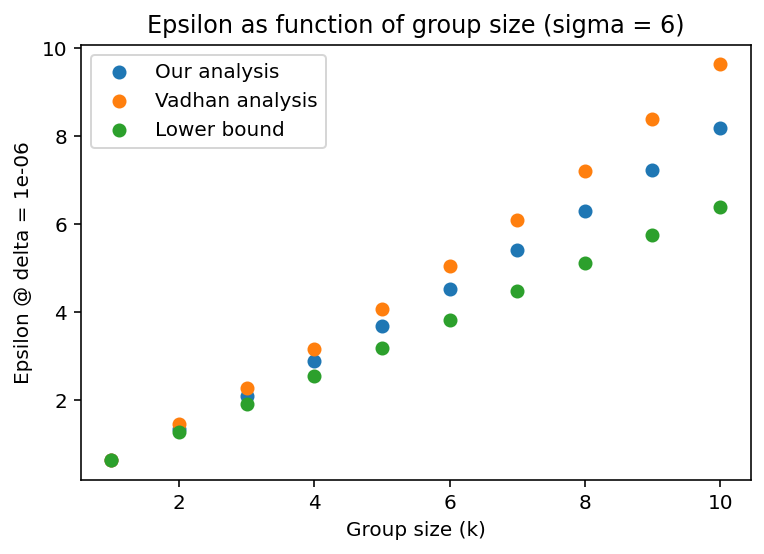}
    \includegraphics[width=.3\linewidth]{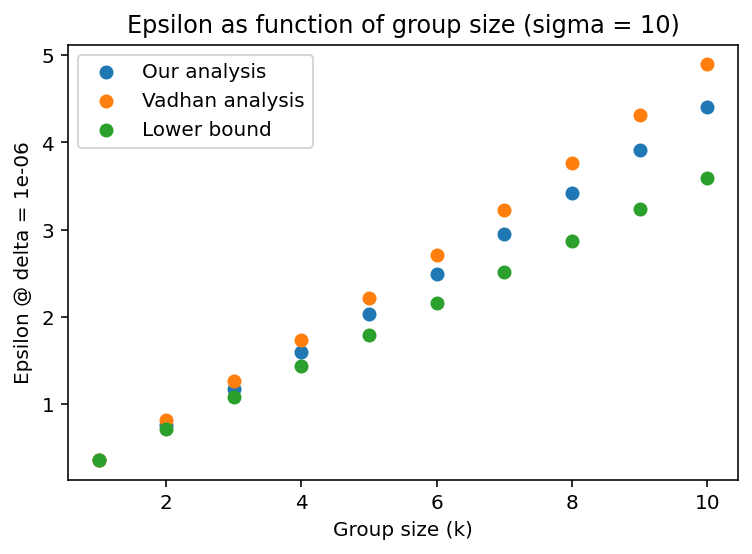}
    \caption{$\epsilon$ as a function of $k$ for fixed batch size sampling. We use $T = 2000, B = 500, n = 50000$. The labels are defined analogously to  \cref{fig:poisson}.}
    \label{fig:minibatch}
\end{figure}

From \cref{fig:poisson}, we see that as expected because our approach is tight, it consistently improves over using \cref{lem:vadhan}. From comparing to the lower bound, it is also clear that (perhaps surprisingly), the growth of the $\epsilon$ parameter of DP-SGD as a function of $k$ remains close to linear even in a regime where $\epsilon \approx 50$, whereas using \cref{lem:vadhan} results in $\epsilon$ that grows exponentially in the same regime. The relative improvement over using \cref{lem:vadhan} becomes larger as $k$ increases and as $\sigma$ decreases (i.e., as $\epsilon$ increases). Furthermore, for $\sigma = 1$ we see that the numerical instability due to using \cref{lem:vadhan} appears even for a moderate group size of $k = 9$ and causes the $\epsilon$ computed by the methods in \texttt{tensorflow\_privacy} to be infinite, whereas our approach is able to compute an $\epsilon$ close to the linear lower bound for this group size.

\section*{Acknowledgements}

The author is thankful to Christopher Choquette-Choo for co-authoring of the initial codebase of \cite{choquettechoo2023privacy} that was built upon in the process of writing this note, to Pritish Kamath for help with integration of MoG mechanisms into the \texttt{dp\_accounting} library, and to Galen Andrew who made the author aware of the question of getting better user-level DP guarantees for DP-SGD. 

\bibliographystyle{alpha}
\bibliography{ref}

\newcommand{\etalchar}[1]{$^{#1}$}
\begin{thebibliography}{CCGM{\etalchar{+}}23}

\bibitem[AT22]{altschuler2022privacy}
Jason~M Altschuler and Kunal Talwar.
\newblock Privacy of noisy stochastic gradient descent: More iterations without more privacy loss.
\newblock {\em arXiv preprint arXiv:2205.13710}, 2022.

\bibitem[CCGM{\etalchar{+}}23]{choquette-choo2023amplified}
Christopher~A. Choquette-Choo, Arun Ganesh, Ryan McKenna, Hugh~Brendan McMahan, J~Keith Rush, Abhradeep~Guha Thakurta, and Zheng Xu.
\newblock (amplified) banded matrix factorization: A unified approach to private training.
\newblock In {\em Thirty-seventh Conference on Neural Information Processing Systems}, 2023.

\bibitem[CCGST23]{choquettechoo2023privacy}
Christopher~A. Choquette-Choo, Arun Ganesh, Thomas Steinke, and Abhradeep Thakurta.
\newblock Privacy amplification for matrix mechanisms, 2023.

\bibitem[CYS21]{ChourasiaYS21}
Rishav Chourasia, Jiayuan Ye, and Reza Shokri.
\newblock Differential privacy dynamics of langevin diffusion and noisy gradient descent.
\newblock In {\em Advances in Neural Information Processing Systems}, 2021.

\bibitem[DGK{\etalchar{+}}22]{doroshenko2022connect}
Vadym Doroshenko, Badih Ghazi, Pritish Kamath, Ravi Kumar, and Pasin Manurangsi.
\newblock Connect the dots: Tighter discrete approximations of privacy loss distributions, 2022.

\bibitem[{DP }22]{pldlib}
{DP Team}.
\newblock Google's differential privacy libraries., 2022.
\newblock \url{https://github.com/google/differential-privacy}.

\bibitem[FMTT18]{feldman2018privacy}
Vitaly Feldman, Ilya Mironov, Kunal Talwar, and Abhradeep Thakurta.
\newblock Privacy amplification by iteration.
\newblock In {\em 2018 IEEE 59th Annual Symposium on Foundations of Computer Science (FOCS)}, pages 521--532. IEEE, 2018.

\bibitem[GLW21]{gopi21numerical}
Sivakanth Gopi, Yin~Tat Lee, and Lukas Wutschitz.
\newblock Numerical composition of differential privacy.
\newblock In M.~Ranzato, A.~Beygelzimer, Y.~Dauphin, P.S. Liang, and J.~Wortman Vaughan, editors, {\em Advances in Neural Information Processing Systems}, volume~34, pages 11631--11642. Curran Associates, Inc., 2021.

\bibitem[KJPH21]{koskela21tight}
Antti Koskela, Joonas J{\"a}lk{\"o}, Lukas Prediger, and Antti Honkela.
\newblock Tight differential privacy for discrete-valued mechanisms and for the subsampled gaussian mechanism using fft.
\newblock In Arindam Banerjee and Kenji Fukumizu, editors, {\em Proceedings of The 24th International Conference on Artificial Intelligence and Statistics}, volume 130 of {\em Proceedings of Machine Learning Research}, pages 3358--3366. PMLR, 13--15 Apr 2021.

\bibitem[Vad17]{vadhan2017complexity}
Salil Vadhan.
\newblock The complexity of differential privacy.
\newblock In {\em Tutorials on the Foundations of Cryptography}, pages 347--450. Springer, 2017.

\bibitem[ZDW22]{zhucharacteristic2022}
Yuqing Zhu, Jinshuo Dong, and Yu-Xiang Wang.
\newblock Optimal accounting of differential privacy via characteristic function.
\newblock In Gustau Camps-Valls, Francisco J.~R. Ruiz, and Isabel Valera, editors, {\em Proceedings of The 25th International Conference on Artificial Intelligence and Statistics}, volume 151 of {\em Proceedings of Machine Learning Research}, pages 4782--4817. PMLR, 28--30 Mar 2022.

\end{thebibliography}
\end{document}